\newcommand{\be}{\begin{equation}}
\newcommand{\ee}{\end{equation}}
\def\beqa{\begin{eqnarray}}
\def\eeqa{\end{eqnarray}}
\def\nn{\nonumber}
\newtheorem{theorem}{Theorem}[section]
\newtheorem{lemma}[theorem]{Proposition}
\renewenvironment{thebibliography}[1]
         {\section*{References}\frenchspacing\small
          \begin{list}{[\arabic{enumi}]}
         {\usecounter{enumi}\parsep=2pt\topsep 0pt
         \settowidth{\labelwidth}{[#1]}
         \leftmargin=\labelwidth\advance\leftmargin\labelsep
         \rightmargin=0pt\itemsep=1pt\sloppy}}{\end{list}}
 \numberwithin{equation}{section}
\begin{document}

\title{$\kappa$-Minkowski star product in any dimension from symplectic realization}
\author{Anna Pacho{\l}$^{a}$ and Patrizia Vitale$^{b,c}$, }
\date{}
\maketitle
\begin{center}
\textit{$^{a}$Dipartimento di Matematica "Giuseppe Peano", Universit\`a degli
Studi di Torino,\\ Via Carlo Alberto, 10 - 10123 Torino, Italy}\\[0pt]
\textit{$^{b}$Dipartimento di Fisica Universit\`{a} di Napoli Federico II, \\ Via Cintia 80126 Napoli, Italy}\\ %
[0pt]
\textit{$^{c}$INFN, Sezione di Napoli, Via Cintia 80126 Napoli, Italy}\\[0pt]
e-mail: \texttt{apachol@unito.it, patrizia.vitale@na.infn.it}\\[1ex]
\end{center}

\begin{abstract}
We derive an explicit expression for the star product reproducing the $\kappa
$-Minkowski Lie algebra in any dimension $n$. The result is obtained by
suitably reducing the Wick-Voros star product defined on $\mathbb{C}%
^{d}_\theta$ with $n=d+1$. It is thus shown that the new star product can be
obtained from a Jordanian twist.

%\noindent{\bf Keywords}: Noncommutative quantum field theory, matrix models
\end{abstract}

\vspace*{-1cm}

\section{Introduction}

Noncommutative space-time models represent an intermediate step in
understanding quantum aspects of space-time in the search for a quantum
theory of gravity. In analogy with quantum physics where quantum phase space
ceases to be a pseudo-Riemannian manifold and classical observables
(commuting functions defined on phase space) are replaced by operators, one
expects that in general relativity space-time as a dynamical variable
itself, becomes quantum at the Planck scale and space-time observables are
replaced by noncommuting operators. In particular space-time coordinate
functions are no longer classical variables but they belong to a
noncommutative `coordinate algebra'.

One of the first models of quantum space-time was proposed in the 1990's 
\cite{DFR}, together with the fuzzy sphere \cite{madore, fuzzy} (for review see \cite{fuzzyrev}), but already
in 1986 space-time noncommutativity was found within string theory \cite%
{witten}. In the context of algebraic geometry, noncommutative geometry has
been independently developed by Connes and collaborators \cite{Connes1}, 
\cite{Connes2}.

Here we shall follow the star product approach, where the noncommutative
algebra generated by coordinate functions is represented in terms of the
original commutative algebra of functions, with the pointwise multiplication
replaced by a noncommutative (star), associative product. There are many
known procedures for introducing star products, for example using
deformation quantization \cite{Konts}, which generalizes the canonical
quantization approach of classical phase space, or, in connection with
quantum groups, via twist deformation \cite{drinfeld, Tak}. Another
possibility, which is the one pursued in the present article, is the
symplectic realization proposed in \cite{selene}, where noncommutative
algebras are obtained as subalgebras of some ``canonical" algebra of Moyal
type.

The star product formulation of space-time noncommutativity is especially
relevant in physical applications, such as gravity \cite{Wess, Aschieri},
field and gauge theories (see e.g \cite{ncft} and refs. therein). Besides
constant noncommutativity, which has been widely explored in the literature,
there is a whole family of Lie algebra type noncommutativity which has
interesting properties with respect to symmetries. For example in three
dimensions, there has recently been a renewed interest in noncommutative
structures on $\mathbb{R}^3$ of Lie algebra type, in connection with their
occurrence in three-dimensional quantum gravity models \cite{QGrav}, where $%
\mathbb{R}^3$ is identified with the dual algebra of the local relativity
group. In such a framework star products are mainly introduced through a
group Fourier transform (see for example \cite{OR13} for a review and
comparison with other techniques), imposing compatibility with the group
convolution.

We shall focus here on the $\kappa $-Minkowski space-time which was
initially introduced in connection with the quantum (deformed) version
of the Poincar\'{e} symmetry (the so called $\kappa $-Poincar\'{e} group)
using the theory of quantum groups \cite{Luk1}. The $\kappa $-Minkowski
space-time has been first introduced in Refs.\cite{MR94,Z94}. It has been
further investigated by many authors \cite{KLM00}-\cite{Arzano09}, \cite{Agostini:2002de} in the
framework of noncommutative quantum field theory and Planck scale physics.

As for the derivation of a star product associated to the $\kappa $%
-Minkowski space-time, many proposals are available in the literature. Among
the others we quote \cite{DS11} where an integral form for the star product
is found for the three-dimensional case upon generalizing the deformation
quantization derivation of the Moyal product; in \cite{MSSG08} different
expressions are found based on the correspondence between realizations and
ordering prescriptions, whereas in \cite{Bu,BP1} the twist approach is
pursued.

In this paper we shall derive yet another expression for the $\kappa $%
-Minkowski star product, which agrees with all the others, as it should, at
the level of the star commutator of coordinate functions, but differs from
the others already at the first order in the noncommutative parameter for
the product of two generic functions. We start from the conjecture that most
noncommutative spaces can be obtained by reduction of some "big enough"
noncommutative space with canonical noncommutativity\footnote{%
Let us notice however that symplectic realizations of noncommutative
algebras do not require necessarily that the canonical noncommutative
algebra to be larger than the one we wish to realize. Examples can be found
in \cite{JS} where 3-d noncommutative spaces are realized in terms of the
2-d Moyal algebra.}. By the word canonical we mean the quantization of the
canonical symplectic form, which yields constant noncommutativity of the
Moyal type. Let us recall here that the Moyal product \cite{Moy} is just a
representative of a whole equivalence class of translation invariant star
products, all implementing the canonical star commutator among real
coordinate functions on $\mathbb{R}^{2d}$ 
\begin{equation}
y_{i}\star y_{j}-y_{j}\star y_{i}=i\theta _{ij},\,\,\,i,j=1,..,2d
\end{equation}%
and differing among them by ordering prescriptions. In the present paper we
shall use indeed the Wick-Voros product \cite{Berezin,Voros} which is the
one associated to normal ordering. For the case at hand, we will show that
by considering $d$ copies of the noncommutative plane equipped with the
Wick-Voros product we can obtain the $d+1\;\kappa $-Minkowski space,
together with a star product, as a subalgebra of the large starting algebra.
Additionally, we show that the star-product obtained in this way can be
obtained by a Jordanian twist, already known in the literature in a slightly
modified form.

The paper is organized as follows. In section \ref{starprod} we realize the $%
\kappa$-Minkowski algebra $\{x^\mu, \mu=0, ..., n\}$ in terms of complex
coordinate functions $\bar z^i, z^i, i= 1,..., d$ of the Wick-Voros plane $%
\mathbb{C}^d_\theta$, and derive an explicit expression of the star product
for the subalgebra of functions $f(x^\mu)$. The content of this derivation
constitutes the object of Proposition \ref{lemma1} and represents the main
result of the paper. In section \ref{sec3} we show that our star product,
obtained by symplectic realization, is actually obtainable by a Jordanian
twist operator. This result is formalized in Proposition \ref{lemma2}. We
finally discuss the issue of the existence of an integration measure with
respect to which our star product would be cyclic and we conclude that no
such measure exists, a result that we argue might be true for all Jordanian
twists. We conclude with final remarks.

\section{An explicit formula for the $\protect\kappa$-Minkowski star product}

{\label{starprod}}

Our starting point is the observation that the $\kappa $-Minkowski Lie
algebra in $n=d+1$ dimensions 
\begin{eqnarray}  \label{kmink}
{[}\hat{x}^{0},\hat{x}^{i}{]} &=&\hat{x}^{i}\;\;\;i=1,...,d  \label{kmink1}
\\
{[}\hat{x}^{i},\hat{x}^{j}{]} &=&0\;\;\;i,j=1,...,d  \label{kmink2}
\end{eqnarray}%
may be realized in terms of $d$ families of creation and annihilation
operators $a^{i},a^{i\dag }$, as follows 
\begin{equation}
\hat{x}^{0}=\sum_{i=1}^{d}a^{i\dag }a^{i},\;\;\;\hat{x}^{i}=a^{i\dag }
\end{equation}%
This realization\footnote{Notice that for the moment all operators are dimensionless.} , which is far from being unique, is a generalization of the
well known Jordan-Schwinger representation of all three-dimensional algebras
in terms of two uncoupled harmonic oscillators. Note that the map, as it is written above works in any dimension and is applicable for arbitrary Lie algebra.

The observation may be extended to Poisson realizations. We consider $d$
copies of the complex plane $\mathbb{C}$ and the algebra of functions $%
\mathcal{F}(\mathbb{C}^{d})$ equipped with the canonical Poisson bracket 
\begin{equation}
\{z^{i},\bar{z}^{j}\}=i\delta ^{ij}
\end{equation}%
The subset of functions $f(x^{\mu }),\,\mu =0,...,d$, with 
\begin{equation}
x^{0}=\sum_{i}\bar{z}^{i}z^{i},\;\;\;\;\;x^{i}=\bar{z}^{i}\;\;\;\;i=1,...,d
\label{kminkfunc}
\end{equation}%
is a Poisson subalgebra of $\mathcal{F}(\mathbb{C}^{d})$, with induced
Poisson brackets 
\begin{eqnarray}
\{x^{0},x^{i}\} &=&ix^{i}  \label{classkmink1} \\
\{x^{i},x^{j}\} &=&0  \label{classkmink2}
\end{eqnarray}%
and 
\begin{equation}
\{f(x),g(x)\}=\frac{\partial f}{\partial x^{\mu }}\{x^{\mu },x^{\nu }\}\frac{%
\partial g}{\partial x^{\nu }}=\partial _{\mu }f(x)\partial _{\nu
}g(x)ic_{\rho }^{\mu \nu }x^{\rho }
\end{equation}%
and $c_{\rho }^{\mu \nu }$ are the structure constants of the $\kappa $%
-Minkowski algebra. We indicate such an algebra with the symbol $\mathcal{A}%
^{n}$. The coordinate functions $x^{\mu },\mu =0,...,n-1$ polynomially
generate $\mathcal{A}^{n}$, with Poisson brackets of the $\kappa $-Minkowski
type.

As a second step, the commutative algebra of functions $\mathcal{F}(\mathbb{C%
}^{d})$ is made into a noncommutative algebra replacing the pointwise
product with the Moyal product 
\begin{equation}
f\star _{M}g\,(z^{i},\bar{z}^{i})=f(z,\bar{z})\exp \bigl[\frac{\theta }{2}(%
\overleftarrow{\partial }_{z^{i}}\overrightarrow{\partial }_{\bar{z}^{i}}-%
\overleftarrow{\partial }_{\bar{z}^{i}}\overrightarrow{\partial }_{z^{i}})%
\bigr]g(z,\bar{z}),\,\,\,\,i=1,...,d  \label{Moyal}
\end{equation}%
with $\theta $ a constant, real parameter. Alternatively, the Wick-Voros
product \cite{Berezin, Voros} can be used 
\begin{equation}
f\star _{WV}g\,(z^{i},\bar{z}^{i})=f(z,\bar{z})\exp \bigl[\theta 
\overleftarrow{\partial }_{z^{i}}\overrightarrow{\partial }_{\bar{z}^{i}}%
\bigr]g(z,\bar{z}),\,\,\,\,i=1,...,d  \label{Voros}
\end{equation}%
As well known, they both belong to the same family of translation invariant
star-products and are equivalent as they yield the same star-commutators for
the coordinate functions 
\begin{equation}
\lbrack z^{i},\bar{z}^{j}]_{\star }=\theta \delta ^{ij}  \label{theta_def}
\end{equation}%
only differing by symmetric terms.

In \cite{selene} the Moyal star product on $\mathbb{R}^{4}$ has been used to
induce, by different realizations, many nonequivalent star products on $%
\mathcal{F}(\mathbb{R}^{3})$, using the fact that all three dimensional
algebras could be realized as Poisson subalgebras of $\mathcal{F}(\mathbb{R}%
^{4})$, in the same spirit as above, and observing that those subalgebras
are also Moyal subalgebras. It was already noticed that complex realizations
are easier to deal with, by means of the identification $\mathbb{R}%
^{4}\simeq \mathbb{C}^{2}$. It was however difficult to find a closed
explicit form for such star products (only $x^{i}\star f(x)$ could be
explicitly calculated), whereas it was observed in \cite{hammou} (also see 
\cite{VW12} for applications) that the Wick-Voros product (\ref{Voros}) is
easier to handle and closed expressions for star products on $\mathbb{R}^{3}$
could be obtained by reduction. An example is the noncommutative algebra $%
\mathbb{R}_{\lambda }^{3}$, with the star product introduced in \cite{hammou}
which reproduces the $\mathfrak{su}(2)$ Lie algebra commutation relations as
a star-commutator of coordinate functions 
\begin{equation}
x^{i}\star x^{j}-x^{j}\star x^{i}=i\lambda \epsilon
_{k}^{ij}x^{k},\;\;i,j,k=1,..,3
\end{equation}%
with 
\begin{equation}
f\star g\,(x)=\exp \left[ \frac{\lambda }{2}\left( \delta
^{ij}x^{0}+i\epsilon _{k}^{ij}x^{k}\right) \frac{\partial }{\partial u^{i}}%
\frac{\partial }{\partial v^{j}}\right] f(u)g(v)|_{u=v=x}  \label{starsu2}
\end{equation}%
and $\lambda $ the noncommutativity parameter, to be identified with $\theta 
$ up to a constant. This was achieved on realizing the coordinates of $%
\mathbb{R}^{3}$ as quadratic-linear functions in $\bar{z}^{a},z^{a},\,a=1,2$ 
\begin{equation}
x^{\mu }=\frac{1}{2}\bar{z}^{a}\sigma _{ab}^{\mu }z^{b},\;\;\;\mu =0,..,3
\label{xmu}
\end{equation}%
with $\sigma ^{i}$ the Pauli matrices, $\sigma ^{0}$ the identity matrix $%
\mathbb{I}_{2}$, and observing that the subalgebra polynomially generated by
the coordinate functions $x^{\mu }$ and properly completed, is closed with
respect to the Wick-Voros star product\footnotemark \label{fnm:1} 
\footnotetext{%
Notice that $(x^{0})^{2}=\sum_{i}(x^{i})^{2}$. Therefore $x^{0}$ is
functionally dependent from the other coordinates. It represents the radius
in polar coordinates. This is not to be confused with the coordinate $x^{0}$
in the $\kappa $-Minkowski algebra which is an independent function\label%
{fnt:1}.}.

The same procedure can be applied to the $\kappa $-Minkowski algebra, Eqs. (%
\ref{kmink1}),(\ref{kmink2}), and extended to $d$ dimensions. The outcome is
contained in the following Proposition, which represents our main result

\begin{lemma}
\label{lemma1} Let us consider $d$ copies of the noncommutative plane
endowed with the Wick-Voros product, $({\mathcal{F}(\mathbb{C}^{d}),
\star_{WV}} )$ and the quadratic-linear functions defined in (\ref{kminkfunc}%
). Then, the Poisson subalgebra $\mathcal{A}^{d+1} \ni f(x^{0},x^{i}) $ is
also a noncommutative subalgebra of $(\mathcal{F}(\mathbb{\mathbb{C}}%
_{\theta }^{d}),\star _{WV})$ with induced star product %\begin{equation}
%f\star g(x)=\sum_{n}\sum_{\stackrel{\sum m_{i}=n}{ i=1,...,d}}\frac{\theta ^{n+\sum_{i}m_{i}}}{%
%(n+\sum_{i}m_{i})!}x^{0}^{n}\prod_i x^{i}^{m_{i}}\,\partial
%_{x^{0}}^{n+\sum_{i}m_{i}}f\,\partial _{x^{0}}^{n}\partial _{x^{i}}^{m_{i}}g
%\end{equation}%
\begin{equation}
(f\star g)(x)=\exp \Bigr[\theta (x^{0}\delta ^{\nu 0}+x^{i}\delta ^{\nu i})%
\frac{\partial }{\partial y^{0}}\frac{\partial }{\partial w^{\nu }}\Bigr]%
f(y)g(w)|_{y=w=x}  \label{kappa_stpr}
\end{equation}
\end{lemma}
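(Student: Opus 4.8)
The plan is to compute the Wick--Voros star product of two functions of the reduced coordinates directly, by tracking how the bidifferential operator in \eqref{Voros} acts on composite functions $f(x^0,x^i)$ with $x^0=\sum_i \bar z^i z^i$ and $x^i=\bar z^i$. First I would note that, because $x^i=\bar z^i$ depends only on the antiholomorphic variables and $x^0$ is bilinear, the only derivative $\partial_{z^j}$ hitting $f$ is $\partial_{z^j}f = \bar z^j \partial_{x^0} f$, while the only derivative $\partial_{\bar z^j}$ hitting $g$ is $\partial_{\bar z^j} g = \bar z^j \partial_{x^0} g + \delta^{ij}\partial_{x^i} g$ (sum over the matching copy). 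Hence a single application of $\overleftarrow\partial_{z^j}\overrightarrow\partial_{\bar z^j}$ on $f\otimes g$ produces $\bar z^j\partial_{x^0}f \cdot(\bar z^j\partial_{x^0}g + \partial_{x^j}g)$, i.e. the first-order term already has the structure $\theta(x^0\partial_{x^0}f\,\partial_{x^0}g + x^j\partial_{x^0}f\,\partial_{x^j}g)$ since $\sum_j \bar z^j\bar z^j = x^0$ and $\bar z^j=x^j$. This reproduces \eqref{kappa_stpr} truncated at first order in $\theta$, and in particular recovers the Poisson bracket \eqref{classkmink1}--\eqref{classkmink2} at the antisymmetric level.

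The substance of the proof is showing the exponentiation survives to all orders, i.e. that the $n$-th term of $\exp[\theta\,\overleftarrow\partial_{z^i}\overrightarrow\partial_{\bar z^i}]$ collapses to $\tfrac{\theta^n}{n!}\bigl((x^0\delta^{\nu 0}+x^j\delta^{\nu j})\partial_{y^0}\partial_{w^\nu}\bigr)^n$ acting on $f(y)g(w)|_{y=w=x}$. The key point I would establish is a \emph{closure} lemma: the operator $\mathcal{D}_z := \sum_i \bar z^i \partial_{z^i}$ acting inside $f$ is just $\partial_{x^0}$ up to reordering, and crucially it does not generate any new structures — each further $z$-derivative on $f$ again only sees $x^0$. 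On the $g$ side one must be slightly more careful: after the first $\bar z$-derivative, $\partial_{x^j}g$ is itself a function of the $x^\mu$, so a second $\bar z^k$-derivative gives $\bar z^k\partial_{x^0}\partial_{x^j}g + \delta^{kl}\partial_{x^l}\partial_{x^j}g$, and one must check that the coefficients $\bar z$ assemble, after summing over which copy the derivative pair acts on, precisely into the matrix $x^0\delta^{\nu0}+x^\nu$ contracted with $\partial_{w^0}\partial_{w^\nu}$. I would verify this by induction on the power $n$, using that all factors of $\bar z$ commute (they are ordered to the left, since in Wick--Voros ordering $\partial_z$ always acts to the left of $\partial_{\bar z}$, and no $\partial_{\bar z}$ ever hits a $\bar z$ coming from a previous step on the same side). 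This bookkeeping — confirming that the combinatorial factor is exactly $1/n!$ and that no cross-terms between different copies spoil the collapse — is the step I expect to be the main obstacle; it is where the normal-ordered (Wick--Voros) choice is essential, as the symmetric Moyal ordering would mix $z$ and $\bar z$ derivatives and destroy the closure.

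Once the closure lemma is in place, the conclusion is immediate: the reduced product is
\be
(f\star g)(x)=\sum_{n\ge 0}\frac{\theta^n}{n!}\Bigl[(x^0\delta^{\nu 0}+x^i\delta^{\nu i})\frac{\partial}{\partial y^0}\frac{\partial}{\partial w^\nu}\Bigr]^n f(y)g(w)\Big|_{y=w=x},
\ee
which is the claimed exponential \eqref{kappa_stpr}. It then remains to observe that this bidifferential operator is well defined on the subalgebra $\mathcal{A}^{d+1}$ — i.e. that $x^0\partial_{x^0}$ and $x^i\partial_{x^i}$ map polynomials in the $x^\mu$ to polynomials in the $x^\mu$ — which is clear, so the subalgebra is indeed closed under $\star$, and that the star commutator of coordinates reproduces \eqref{kmink1}--\eqref{kmink2} (only the $n=1$ antisymmetric part contributes, since $x^0, x^i$ are linear), giving associativity for free as an inherited property from $\star_{WV}$. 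As a consistency check I would also verify directly that the operator in \eqref{kappa_stpr} is associative on its own, which amounts to the Jacobi-type identity that the structure constants $c^{\mu\nu}_\rho$ already satisfy; this foreshadows the twist interpretation of Proposition \ref{lemma2}.
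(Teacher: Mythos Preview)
Your approach is essentially the paper's: reduce the Wick--Voros bidifferential operator to the $x^\mu$ coordinates via the chain rule, check the first order, and argue that the structure persists to all orders. The paper packages the all-orders step slightly more crisply by observing that the two summands $\overleftarrow{\partial}_{x^0}\, x^0\, \overrightarrow{\partial}_{x^0}$ and $\overleftarrow{\partial}_{x^0}\, x^i\, \overrightarrow{\partial}_{x^i}$ in the reduced operator \emph{commute}, so the exponential of their sum factorises with no cross terms; your inductive ``closure lemma'' is the same computation unwound. One slip to fix: $\partial_{\bar z^j} x^0 = z^j$, not $\bar z^j$, so the first-order contribution is $\bar z^j\,\partial_{x^0}f\cdot(z^j\,\partial_{x^0}g+\partial_{x^j}g)$ and the contraction giving $x^0$ is $\sum_j \bar z^j z^j$, not $\sum_j \bar z^j\bar z^j$; this is a transcription error that does not affect the argument.
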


\begin{proof}
This is obtained from Eq. (\ref{Voros}) observing that 
\begin{equation}
\frac{\overleftarrow{\partial }}{\partial z^{i}}\frac{\overrightarrow{%
\partial }}{\partial \bar{z}^{i}}=\bigl(\frac{\overleftarrow{\partial }}{%
\partial x^{0}}\bar{z}^{i}\bigr)\bigl(z^{i}\frac{\overrightarrow{\partial }}{%
\partial x^{0}}+\frac{\overrightarrow{\partial }}{\partial x^{i}}\bigr)=%
\frac{\overleftarrow{\partial }}{\partial x^{0}}x^{0}\frac{\overrightarrow{%
\partial }}{\partial x^{0}}+\frac{\overleftarrow{\partial }}{\partial x^{0}}%
x^{i}\frac{\overrightarrow{\partial }}{\partial x^{i}}.  \label{diffop}
\end{equation}%
In order to exponentiate this result and to prove (\ref{kappa_stpr}), the
crucial observation is that the two summands in  the RHS of (\ref{diffop}) commute. At
second order we have for instance 
\begin{eqnarray}
\Bigl(\frac{\overleftarrow{\partial }}{\partial z^{i}}\frac{\overleftarrow{%
\partial }}{\partial z^{j}}\Bigr)\Bigl(\frac{\overrightarrow{\partial }}{%
\partial \bar{z}^{i}}\frac{\overrightarrow{\partial }}{\partial \bar{z}^{j}}%
\Bigr) &=&\Bigl(\frac{\overleftarrow{\partial }^{2}}{(\partial x^{0})^{2}}\,%
\bar{z}^{i}\bar{z}^{j}\Bigr)\Bigl(z^{i}z^{j}\frac{\overrightarrow{\partial }%
^{2}}{(\partial x^{0})^{2}}+\frac{\overrightarrow{\partial }^{2}}{\partial
x^{i}\partial x^{j}}+z^{i}\frac{\overrightarrow{\partial }}{\partial x^{0}}%
\frac{\overrightarrow{\partial }}{\partial x^{j}}+z^{j}\frac{\overrightarrow{%
\partial }}{\partial x^{0}}\frac{\overrightarrow{\partial }}{\partial x^{i}}%
\Bigr)  \notag \\
&=&\frac{\overleftarrow{\partial }^{2}}{(\partial x^{0})^{2}}(x^{0})^{2}\frac{%
\overrightarrow{\partial }^{2}}{(\partial x^{0})^{2}}+\frac{\overleftarrow{%
\partial }^{2}}{(\partial x^{0})^{2}}x^{i}x^{j}\frac{\overrightarrow{\partial }%
^{2}}{\partial x^{i}\partial x^{j}}+2\frac{\overleftarrow{\partial }^{2}}{%
(\partial x^{0})^{2}}x^{0}x^{i}\frac{\overrightarrow{\partial }^{2}}{\partial
x^{i}\partial x^{0}}\nn
\end{eqnarray}%
where we have used 
\beqa
\frac{\overrightarrow{\partial }}{\partial \bar{z}^{i}}\frac{\overrightarrow{%
\partial }}{\partial \bar{z}^{j}}&=&\frac{\overrightarrow{\partial }}{\partial 
\bar{z}^{i}}(z^{j}\frac{\overrightarrow{\partial }}{\partial x^{0}}+\frac{%
\overrightarrow{\partial }}{\partial x^{j}})=z^{j}\frac{\overrightarrow{%
\partial }}{\partial \bar{z}^{i}}\frac{\overrightarrow{\partial }}{\partial
x^{0}}+\frac{\overrightarrow{\partial }}{\partial \bar{z}^{i}}\frac{%
\overrightarrow{\partial }}{\partial x^{j}}\nn\\
&=&z^{j}\bigl(z^{i}\frac{%
\overrightarrow{\partial }}{\partial x^{0}}+\frac{\overrightarrow{\partial }%
}{\partial x^{i}}\bigr)\frac{\overrightarrow{\partial }}{\partial x^{0}}+%
\bigl(z^{i}\frac{\overrightarrow{\partial }}{\partial x^{0}}+\frac{%
\overrightarrow{\partial }}{\partial x^{i}}\bigr)\frac{\overrightarrow{%
\partial }}{\partial x^{j}} \label{diffop2}
\eeqa
and this can be generalized to higher powers.
\end{proof}
As a corollary, the star product (\ref{kappa_stpr}) is associative, since
the Wick-Voros product on $\mathbb{C}^{d}_\theta$ is associative by
construction, being the trace over coherent states of the operator product 
\cite{Berezin}.

Notice that, had we started form the Moyal product (\ref{Moyal}) we should
have considered powers of the sum of two differential operators, the first
order in $\theta $ being 
\begin{equation}
\theta \Bigl[\frac{\overleftarrow{\partial }}{\partial z^{i}}\frac{%
\overrightarrow{\partial }}{\partial \bar{z}^{i}}-\frac{\overleftarrow{%
\partial }}{\partial \bar{z}^{i}}\frac{\overrightarrow{\partial }}{\partial
z^{i}}\Bigr]
\end{equation}%
which should be understood in terms of (\ref{diffop}), and its analogue with 
$z,\bar{z}$ exchanged, so to have 
\begin{eqnarray}
\theta \Bigl[\frac{\overleftarrow{\partial }}{\partial z^{i}}\frac{%
\overrightarrow{\partial }}{\partial \bar{z}^{i}}-\frac{\overleftarrow{%
\partial }}{\partial \bar{z}^{i}}\frac{\overrightarrow{\partial }}{\partial
z^{i}}\Bigr] &=&\theta \Bigl[\Bigl (\frac{\overleftarrow{\partial }}{%
\partial x^{0}}x^{0}\frac{\overrightarrow{\partial }}{\partial x^{0}}+\frac{%
\overleftarrow{\partial }}{\partial x^{0}}x^{i}\frac{\overrightarrow{%
\partial }}{\partial x^{i}}\Bigr)-\Bigl (\frac{\overleftarrow{\partial }}{%
\partial x^{0}}x^{0}\frac{\overrightarrow{\partial }}{\partial x^{0}}+\frac{%
\overleftarrow{\partial }}{\partial x^{i}}x^{i}\frac{\overrightarrow{%
\partial }}{\partial x^{0}}\Bigr)\Bigr]  \notag \\
&=&\theta \Bigl[\frac{\overleftarrow{\partial }}{\partial x^{0}}x^{i}\frac{%
\overrightarrow{\partial }}{\partial x^{i}}-\frac{\overleftarrow{\partial }}{%
\partial x^{i}}x^{i}\frac{\overrightarrow{\partial }}{\partial x^{0}}\Bigr]
\end{eqnarray}%
This time the two summands do not commute, which signals a problem in
expressing the star product in closed form. Indeed, considering higher
orders in $\theta$ the Moyal product expansion develops mixed terms. For
example at second order in $\theta $ we find mixed terms of the kind 
\begin{equation}
\frac{\overrightarrow{\partial }}{\partial z^{i}}\frac{\overrightarrow{%
\partial }}{\partial \bar{z}^{j}}=\frac{\overrightarrow{\partial }}{\partial
z^{i}}(z^{j}\frac{\overrightarrow{\partial }}{\partial x^{0}}+\frac{%
\overrightarrow{\partial }}{\partial x^{j}})
\end{equation}%
where, differently from (\ref{diffop2}), the first and second differential
operator do not commute\footnote{%
However, this is not a proof that a closed form for the Moyal based
star-product does not exist. We will show in the next section that the WV
based star product is related to a non-symmetric Jordanian twist. It would
be interesting to investigate whether a symmetrized Jordanian twist might
yield the Moyal based product.}. This problem was already noticed for the $%
\mathfrak{su}(2)$-like star product and led to the Wick-Voros based star
product (\ref{starsu2}) as opposed to the Moyal based one \cite{selene}.

In order to compare the $\kappa$-Minkowski star product with Eq. (\ref%
{starsu2}) we may rewrite it in terms of the structure constants of the $%
\kappa$-Minkowski algebra so to obtain 
\begin{equation}
f\star g(x)=\exp \Bigr[\theta (x^{0}\delta ^{\nu 0}+x^{i} c^{0\nu}_i)\frac{%
\partial }{\partial y^{0}}\frac{\partial }{\partial w^{\nu }}\Bigr]%
f(y)g(w)|_{y=w=x} .  \label{stpr2}
\end{equation}%
Despite the similarity of this expression with Eq. (\ref{starsu2}), the two
products are very different. The reason, already enunciated in the footnote $%
\;^{\ref{fnt:1}}$ is that here $x^0$ is an independent coordinate function
whereas in Eq. (\ref{starsu2}) it is functionally dependent from the other
ones: the latter generates the non-trivial center of the algebra (which is
related to the Casimir of $SU(2)$), while the $\kappa$-Minkowski algebra has
no non-trivial center (the corresponding group has no Casimir).

The result of Proposition \ref{lemma1} does not depend on space-time
dimension and has the advantage of being given in terms of a closed
expression unlike other known results in the literature.

For further developments in next sections it is useful to write the star
product (\ref{kappa_stpr}) (equiv. (\ref{stpr2})) as a series expansion. We
have 
\begin{equation}
\exp \Bigr[\theta (x^{0}\delta ^{\nu 0}+x^{i}\delta ^{\nu i})\frac{\partial 
}{\partial y^{0}}\frac{\partial }{\partial w^{\nu }}\Bigr]=\exp \Bigr[\theta 
\frac{\partial }{\partial y^{0}}x^{\nu }\frac{\partial }{\partial w^{\nu }}%
\Bigr]=\sum_{n=0}^{\infty }\frac{\theta ^{n}}{n!}\left( \frac{\partial }{%
\partial y^{0}}\right) ^{n}\left( x^{\nu }\frac{\partial }{\partial w^{\nu }}%
\right) ^{n}.  \label{sexp}
\end{equation}

We shall come back to this expression in section \ref{sec3}, where we will
show that our star product can be obtained as a twisted product with
Jordanian twist.

\section{Twist operators and deformed symmetries}

\label{sec3}

One can relate the star product formulation with the twist-deformation
approach, but only for a certain type of quantum deformations. The
deformations leading to noncommutative algebras (as noncommutative
space-times) are related to the Hopf algebras formalism. Let us then recall
a few important features of this approach.

Let us denote with $H$ a Hopf algebra, equipped with coproduct, counit and
antipode maps $\left( H,\Delta ,\epsilon ,S\right) $. Hopf algebras
generalize in noncommutative geometry the symmetries of space-time. The
classical example of a Hopf algebra is provided by $U\mathfrak{g}$, the
universal enveloping algebra of some Lie algebra $\mathfrak{g}$, with the
Hopf algebra maps $(\Delta ,\epsilon ,S)$ appropriately defined.

Let us stick to the latter situation, with $\mathfrak{g}$ the Lie algebra of
space-time symmetries. The Hopf algebra $U\mathfrak{g}$ can undergo a
deformation procedure and the deformation may be obtained trough a twist
operator, $F$ \cite{drinfeld,Tak}, 
%In that case the deformation is called non-standard (or triangular).
which is an invertible element in the tensor product of Hopf algebras $%
H\otimes H$. One of the advantages of the twist deformation is that it
provides straightforwardly the universal quantum $R$ matrix and an explicit
formula for the star product on the Hopf module algebra, which is consistent
with Hopf-algebra actions related to $H$. Issues such as symmetries and
invariances can thus be properly addressed.

The twist fulfils the cocycle condition: 
\begin{equation}
\left( F\otimes 1\right) \left( \Delta \otimes 1\right) F=\left( 1\otimes
F\right) \left( 1\otimes \Delta \right) F  \label{cocycle}
\end{equation}%
and the normalization condition 
\begin{equation}
(1\otimes \epsilon ){F}=(\epsilon \otimes 1){F}=1\otimes 1.  \label{normal}
\end{equation}%
Under the twisted deformation the commutation relations of the Lie algebra
sector of the Hopf algebra do not change but the coalgebra structure is
suitably modified via relations: 
\begin{eqnarray}
\Delta _{F}(X) &=&{F}\Delta (X){F}^{-1}  \label{tw_cop} \\
\varepsilon (X) &=&0,\quad S_{F}(X)=\mathrm{f}^{\alpha }S(\mathrm{f}_{\alpha
})S(X)S(\bar{\mathrm{f}}^{\beta })\bar{\mathrm{f}}_{\beta }.  \label{tw_S}
\end{eqnarray}%
%
%=e^{X}\Delta _{0}\left( Y\right)e^{-X}=\sum_{n=0}^{\infty }\frac{\left( adX\right) ^{n}}{n!}\Delta_{0}\left( Y\right) \quad (F=e^{X});\quad S^{F}=....$. 
where we used the notation ${F}=\mathrm{f}^{\alpha }\otimes \mathrm{f}%
_{\alpha },\quad {F}^{-1}=\bar{\mathrm{f}}^{\alpha }\otimes \bar{\mathrm{f}}%
_{\alpha }$\footnote{%
For each value of $\alpha $, $\mathrm{f}^{\alpha }$ and $\mathrm{f}_{\alpha
} $ (and similarly $\bar{\mathrm{f}}^{\alpha }$ and $\bar{\mathrm{f}}%
_{\alpha } $) are different elements of $H$.}. Such a deformation of the
co-structures in the universal enveloping algebra of symmetries implies a
simultaneous deformation in the space-time algebra. In this way the
commutative multiplication in the algebra of functions\footnote{%
Twist deformations can be introduced for general differentiable manifolds $M$%
, however here we are only interested in $M=\mathbb{R}^{d+1}$ to be
consistent with the previous section.} $\mathcal{A}^{d+1}$ is replaced by a
new twisted one: 
\begin{equation}
f\star _{F}g=\mu \circ {F}^{-1}(f\otimes g)=\bar{\mathrm{f}}^{\alpha }(f)%
\bar{\mathrm{f}}_{\alpha }(g),\quad f,g\in \mathcal{A}^{d+1}  \label{stpr}
\end{equation}%
where $\mu $ is the usual point-wise multiplication. As a result one obtains
a noncommutative algebra $(\mathcal{A}^{d+1},\star _{F})$, which we identify
with the noncommutative space-time as before. In the case of triangular
deformations it is possible to associate with a given star product defined
through a quantization-dequantization scheme \footnote{For details on the quantization and dequantization operators see \cite{stratonovich}, \cite{quant-dequant} and references therein.} the corresponding twisting
element.

For example considering the Moyal product defined in (\ref{Moyal}) we can
easily find the corresponding twist as: 
\begin{equation}
F_M=\exp \left( -\frac{\theta }{2}(\partial _{z^{i}}\otimes \partial _{\bar{z%
}^{i}}-\partial _{\bar{z}^{i}}\otimes \partial _{z^{i}})\right)  \label{F_M}
\end{equation}%
Such a twist is called Abelian\footnote{%
Analogously we can define a twist operator for the Wick-Voros product \cite%
{GLV08}, which is the one used in this paper, as $F_{\theta }=\exp \left( -{%
\theta }(\partial _{z^{i}}\otimes \partial _{\bar{z}^{i}})\right) $, with
support in the same Abelian algebra. The two twists only differ by
co-boundary terms \cite{GLV09}.} because it has support in the Abelian (Lie)
algebra generated by $\mathfrak{a}=\mathrm{span}\{\partial _{z^{i}},\partial
_{\bar{z}^{i}}\},i=1...d$. It is easy to define a Hopf algebra structure on
its undeformed universal enveloping algebra $U{\mathfrak{a}}$ by defining
all the maps on the generating elements as follows:

-coproduct $\Delta \left( \partial _{z^{i}}\right) =\partial _{z^{i}}\otimes
1+1\otimes \partial _{z^{i}}\quad ;\quad \Delta (\partial _{\bar{z}%
^{i}})=\partial _{\bar{z}^{i}}\otimes 1+1\otimes \partial _{\bar{z}^{i}}$

-counit $\epsilon (\partial _{z^{i}})=0\quad ;\quad \epsilon (\partial _{%
\bar{z}^{i}})=0$

-antipode $S\left( \partial _{z^{i}}\right) =-\partial _{z^{i}}\quad ;\quad
S\left( \partial _{\bar{z}^{i}}\right) =-\partial _{\bar{z}^{i}}$

and then extending them to all the elements of $U{\mathfrak{a}}$.

Let us consider now the deformation of the Hopf algebra $U{\mathfrak{a}}$
associated with the Moyal twist (\ref{F_M}). We can immediately see that the
cocycle condition (\ref{cocycle}) is satisfied by $F_M$. To be more precise
the twisted deformation of any Lie algebra $\mathfrak{g}$ requires a
topological extension of the corresponding enveloping algebra $U\mathfrak{g}$
into an algebra of formal power series $U\mathfrak{g}[[\theta ]]$ in the
formal parameter $\theta $. We have $U_{F_M}{\mathfrak{a}}[[\theta]]\equiv
\left( U{\mathfrak{a}}[[\theta]],\Delta _{F_M},S_{F_M},\epsilon \right) $
with the twisted coproduct and the antipode maps obtained by (\ref{tw_cop}-%
\ref{tw_S}). We notice that the coalgebra part stays undeformed under the
action of this twist, i.e.: 
\begin{eqnarray}
\Delta _{F_M}\left( \partial _{z^{i}}\right) &=&\Delta _{0}\left( \partial
_{z^{i}}\right) \quad;\quad\Delta _{F_M}\left( \partial _{\bar{z}%
^{i}}\right) =\Delta _{0}\left( \partial _{\bar{z}^{i}}\right) \\
S_{F_M}\left( \partial _{z^{i}}\right) &=&-\partial _{z^{i}}\quad ;\quad
S_{F_M}\left( \partial _{\bar{z}^{i}}\right) =-\partial _{\bar{z}^{i}}.
\end{eqnarray}

The case we are interested in this paper, however, is related to a slightly
more complicated kind of space-time noncommutativity, which is the $\kappa $%
-Minkowski star product defined in Eq. (\ref{kappa_stpr}) or equivalently (%
\ref{sexp}). We prove the following

\begin{lemma}
\label{lemma2} The star product defined in Eq. (\ref{sexp}) is associated to
a twist operator 
\begin{equation}
f\star g=\mu \circ \left( F_{J}^{-1}\left( f\otimes g\right) \right)
\label{tobeproven}
\end{equation}%
with $F_{J}$ the Jordanian twist \cite{Jord1} 
\begin{equation}
F_{J}=\exp \left( \sigma \otimes J\right) \quad ;\quad \sigma =\ln \left(
1+\theta P_{0}\right) \quad  \label{twist}
\end{equation}%
The twist $F_{J}$ is an element of $\left( U\mathfrak{b}\otimes U\mathfrak{b}%
\right) \left[ \left[ \theta \right] \right] $ with $\mathfrak{b}=\mathrm{%
span}\{P_{0},J;\left[ J\,,P_{0}\right] =P_{0}\}$ the two-dimensional Borel
subalgebra of $\mathfrak{gl}(2,\mathbb{C})$.
\end{lemma}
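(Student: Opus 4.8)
The plan is to realize the abstract generators $P_{0},J$ of the Borel subalgebra $\mathfrak{b}$ as explicit first-order differential operators on $\mathcal{A}^{d+1}$, to compute the twisted product $\mu\circ F_{J}^{-1}$ as a formal power series in $\theta$, and to identify it term by term with the expansion (\ref{sexp}). First I would fix the representation of $\mathfrak{b}$ on functions of the $x^{\mu}$ by prescribing its action on generators,
\[
P_{0}\triangleright x^{\mu}=\delta^{\mu 0},\qquad J\triangleright x^{\mu}=-x^{\mu},
\]
so that $P_{0}\triangleright f=\partial_{0}f$ is the translation along $x^{0}$ while $-J\triangleright f=D f$, with $D:=x^{\mu}\partial_{\mu}$ the Euler (dilation) operator. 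A one-line computation gives $[J,P_{0}]=P_{0}$ as operators, so $\mathfrak{b}=\mathrm{span}\{P_{0},J\}$ is the two-dimensional non-abelian Lie algebra and embeds in $\mathfrak{gl}(2,\mathbb{C})$ as the Borel subalgebra of the statement, e.g. $J\mapsto E_{11}$, $P_{0}\mapsto E_{12}$. Since $P_{0}$ and $J$ act as derivations, $\mathcal{A}^{d+1}$ is a $U\mathfrak{b}$-module algebra, and the reinterpretation of the star product as a twisted product is then consistent with the associativity already proved in the corollary to Proposition~\ref{lemma1}. That $F_{J}=\exp(\sigma\otimes J)$ with $\sigma=\ln(1+\theta P_{0})$ satisfies the cocycle condition (\ref{cocycle}) and the normalization (\ref{normal}) is the defining property of the Jordanian twist and can be imported from \cite{Jord1}, or checked directly using that $P_{0}$ is primitive.

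Next I would expand $F_{J}^{-1}=\exp(-\sigma\otimes J)=\exp(\sigma\otimes D)$ (using $J=-D$ as an operator) and evaluate $\mu\circ F_{J}^{-1}$ on an $f\otimes g$ with $g$ homogeneous of degree $k$, for which $D\triangleright g=kg$. This collapses to
\[
\mu\circ F_{J}^{-1}(f\otimes g)=\sum_{m\ge 0}\frac{1}{m!}\bigl(\sigma^{m}\triangleright f\bigr)\bigl(D^{m}\triangleright g\bigr)=\bigl(e^{k\sigma}\triangleright f\bigr)\,g=\bigl((1+\theta\partial_{0})^{k}f\bigr)\,g,
\]
where in the last step I use $e^{k\ln(1+\theta\partial_{0})}=(1+\theta\partial_{0})^{k}$ as a terminating binomial series. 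On the other side, the operator acting on the right slot of (\ref{sexp}) is the \emph{normal-ordered} Euler operator: by induction on $n$,
\[
\Bigl(x^{\nu}\tfrac{\partial}{\partial w^{\nu}}\Bigr)^{n}g(w)\Big|_{w=x}=x^{\nu_{1}}\cdots x^{\nu_{n}}\,\partial_{\nu_{1}}\cdots\partial_{\nu_{n}}\,g(x)=D(D-1)\cdots(D-n+1)\,g(x),
\]
so on $g$ homogeneous of degree $k$ it multiplies by the falling factorial $k!/(k-n)!$ (and annihilates $g$ for $n>k$). Feeding this into (\ref{sexp}) and resumming with the binomial theorem gives, once more, $f\star g=\bigl((1+\theta\partial_{0})^{k}f\bigr)g$. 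Since both products are bidifferential operators, linear in $g$ and of finite order at each order in $\theta$, agreement on all homogeneous $g$ forces agreement on all $g$, which is (\ref{tobeproven}).

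I expect the crux to be the identity that bridges the two forms, namely that the logarithm in $\sigma=\ln(1+\theta P_{0})$ is precisely what turns the honest power $D^{m}$ appearing in $\exp(\sigma\otimes D)$ into the normal-ordered product $D(D-1)\cdots(D-n+1)$ inherited from the Wick--Voros reduction (\ref{diffop}); equivalently,
\[
\exp\bigl(\ln(1+\theta P_{0})\otimes D\bigr)=\sum_{n\ge 0}\frac{\theta^{n}}{n!}\,P_{0}^{\,n}\otimes D(D-1)\cdots(D-n+1),
\]
which is the Hopf-algebraic shadow of the binomial identity $(1+t)^{D}=e^{D\ln(1+t)}$. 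Everything else is bookkeeping: keeping all expressions as formal power series in $\theta$ so that $\sigma$ and $F_{J}^{-1}$ are genuine finite-order differential operators order by order, verifying the commutator $[J,P_{0}]=P_{0}$ and the $\mathfrak{gl}(2,\mathbb{C})$ embedding, and, for a fully self-contained argument, checking the cocycle condition for $F_{J}$ in place of quoting \cite{Jord1}.
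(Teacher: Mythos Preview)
Your proposal is correct and takes essentially the same approach as the paper: both arguments rest on the single identity that the falling factorial $D(D-1)\cdots(D-n+1)$ of the Euler operator $D=x^{\mu}\partial_{\mu}$ coincides with the normal-ordered product $x^{\mu_{1}}\cdots x^{\mu_{n}}\partial_{\mu_{1}}\cdots\partial_{\mu_{n}}=(x^{\mu}\partial_{w^{\mu}})^{n}$, which is precisely what turns the expansion $\exp(\ln(1+\theta\partial_{0})\otimes D)=\sum_{n}\frac{\theta^{n}}{n!}\,\partial_{0}^{n}\otimes D^{\underline{n}}$ into the series~(\ref{sexp}). The only cosmetic difference is that the paper proves this operator identity by a short induction, whereas you verify it by evaluating both sides on homogeneous $g$ (eigenvectors of $D$) and then invoking linearity; the content is the same.
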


\begin{proof}
Using the representation for the generators  $J=-x^{0}\partial
_{0}-x^{k}\partial _{k}=-x^{\mu }\partial _{\mu }\quad \,$and$\quad
P_{0}=\partial _{0}$ we can expand the inverse of the twist in the following
way:
\begin{equation}
F_{J}^{-1}=\exp \left( \ln \left( 1+\theta \partial _{0}\right) \otimes
x^{\mu }\partial _{\mu }\right) =\left( 1+\alpha \partial _{0}\right)
_{\left( 1\right) }^{\left( x^{\mu }\partial _{\mu }\right) _{\left(
2\right) }}=\sum_{n=0}^{\infty }\frac{1}{n!}\left( \theta \partial
_{0}\right) _{\left( 1\right) }^{n}\cdot \left( x^{\mu }\partial _{\mu
}\right) _{\left( 2\right) }^{\underline{n}}
\end{equation}
where $y^{\underline{n}}=y\left( y-1\right) ...\left( y-n+1\right) $ and $y^{%
\underline{1}}=y\quad ;\quad y^{\underline{0}}=1$. We can re-write the twist
as: 
\begin{equation}
F_{J}^{-1}=\sum_{n=0}^{\infty }\frac{\theta ^{n}}{n!}\left( \partial
_{0}\right) _{\left( 1\right) }^{n}\cdot \left( x^{\mu }\partial _{\mu
}\right) _{\left( 2\right) }^{\underline{n}}=\sum_{n=0}^{\infty }\frac{%
\theta ^{n}}{n!}\left( \partial _{y_{0}}\right) _{\left( 1\right) }^{n}\cdot
\left( x^{\mu }\partial _{w_{\mu }}\right) _{\left( 2\right) }^{n}
\end{equation}%
where 
\begin{equation*}
\left( x^{\mu }\partial _{x^{\mu }}\right) ^{\underline{n}}=x^{\alpha
}\partial _{x^{\alpha }}\left( x^{\beta }\partial _{x^{\beta }}-1\right)
\left( x^{\gamma }\partial _{x^{\gamma }}-2\right) \cdot ....\cdot \left(
x^{\mu }\partial _{x^{\mu }}-n+1\right)
\end{equation*}%
and the notation $x^{\mu }\partial _{w^{\mu }}$ is a way to indicate that
the differential operator $\partial _{w^{\mu }}$ doesn't act on functions of 
$x$ but only on functions of $w$. Indeed we notice that 
\begin{eqnarray}
\left( x^{\mu }\partial _{x^{\mu }}\right) ^{\underline{2}} &=&\left(
x^{\alpha }\partial _{x^{\alpha }}\right) x^{\beta }\partial _{x^{\beta
}}-x^{\alpha }\partial _{x^{\alpha }}=x^{\beta }\partial _{x^{\beta
}}+x^{\alpha }x^{\beta }\partial _{x^{\alpha }}\partial _{x^{\beta
}}=x^{\alpha }x^{\beta }\partial _{x^{\alpha }}\partial _{x^{\beta }}  \notag
\\
&=&\left( x^{\mu }\partial _{w_{\mu }}\right) ^{2}
\end{eqnarray}
 and by induction we can prove that 
\begin{equation}
\left( x^{\mu }\partial _{\mu }\right) ^{\underline{n}}=\underbrace{x^{\mu
}x^{\rho }...x^{\lambda }}_{\text{n}}\ \underbrace{\partial _{\mu }\partial
_{\rho }...\partial _{\lambda }}_{\text{n}}\ =\left( x^{\mu }\partial
_{w}^{\mu }\right) ^{n}
\end{equation}
Upon substituting this result into Eq. (\ref{tobeproven}) we obtain 
\begin{equation}
f\star g(x)=\Bigl[\mu \circ \Bigl(F_{J}^{-1}\left( f\otimes g\right) (w)%
\Bigr)\Bigr]_{w=x}=\sum_{n=0}^{\infty }\frac{\theta ^{n}}{n!}\left( \partial
_{y^{0}}\right) ^{n}f(y)\left( x^{\mu }\partial _{w_{\mu }}\right)
^{n}g(w)|_{y=w=x}
\end{equation}%
which agrees with Eq. (\ref{sexp}). \end{proof}

The twist $F_{J}$ (first introduced in \cite{Jord2}, in a slightly different
form) has support in the Borel algebra $\mathfrak{b}$. However, if we are
interested in the symmetries of the $\kappa $-Minkowski space-time, we use
the fact that the above two-dimensional Borel algebra $\mathfrak{b}$ is a
subalgebra of the Poincar\'{e}-Weyl algebra (one generator extension of the
Poincar\'{e} algebra) $\mathfrak{pw}=\mathrm{span}\{M_{\mu \nu },P_{\mu
},J\},\mu ,\nu =0,1,2,3$ with the following commutation relations\footnote{%
We are using anti-hermitean generators.}: 
\begin{eqnarray}
\left[ M_{\mu \nu },M_{\rho \lambda }\right] &=&\eta _{\nu \rho }M_{\mu
\lambda }+\eta _{\mu \lambda }M_{\nu \rho }-\eta _{\mu \rho }M_{\nu \lambda
}-\eta _{\nu \lambda }M_{\mu \rho } \\
\left[ M_{\mu \nu },P_{\rho }\right] &=&\eta _{\nu \rho }P_{\mu }-\eta _{\mu
\rho }P_{\nu } \\
\left[ P_{\mu },P_{\nu }\right] &=&0\quad ;\quad \left[ J,P_{\mu }\right]
=P_{\mu }\quad ;\quad \left[ J,M_{\mu \nu }\right] =0
\end{eqnarray}%
Again we can turn $U\mathfrak{pw}$ into a Hopf algebra by defining the
following maps on the generators: $\Delta \left( J\right) =J\otimes
1+\otimes J\quad ,\;\Delta (P_{\mu })=P_{\mu }\otimes 1+1\otimes P_{\mu
}\quad ,\;\Delta (M_{\mu \nu })=M_{\mu \nu }\otimes 1+1\otimes M_{\mu \nu },$

$\epsilon (J)=\epsilon (P_{\mu })=\epsilon (M_{\mu \nu })=0\quad ,\;S\left(
J\right) =-J\quad ,\;S\left( P_{\mu }\right) =-P_{\mu }\quad ,\;S\left(
M_{\mu \nu }\right) =-M_{\mu \nu }$ and then extending them to the whole $U%
\mathfrak{pw}$.

After twisting the coalgebra sector of $U\mathfrak{pw}$ with $F_{J}$ via
Eqs. (\ref{tw_cop}-\ref{tw_S}) we get its twisted version $U\mathfrak{pw}%
_{F_{J}}\left[ \left[ \theta \right] \right] $(cf.\cite{BP1}) with: 
\begin{eqnarray}
\Delta _{F_{J}}\left( P_{\mu }\right) &=&P_{\mu }\otimes 1+e^{\sigma
}\otimes P_{\mu } \\
\Delta _{F_{J}}\left( M_{\mu \nu }\right) &=&M_{\mu \nu }\otimes 1+1\otimes
M_{\mu \nu }+\theta \left( \eta _{\nu 0}P_{\mu }-\eta _{\mu 0}P_{\nu
}\right) e^{-\sigma }\otimes J \\
\Delta _{F_{J}}\left( J\right) &=&J\otimes 1+1\otimes J-\theta \partial
_{0}e^{-\sigma }\otimes J=J\otimes 1+e^{-\sigma }\otimes J \\
S_{F_{J}}\left( P_{\mu }\right) &=&-P_{\mu }e^{-\sigma }\quad ;\quad
S_{F_{J}}\left( M_{\mu \nu }\right) =-M_{\mu \nu }+\theta \left( \eta _{\nu
0}P_{\mu }-\eta _{\mu 0}P_{\nu }\right) J\quad ;\quad S_{F_{J}}\left(
J\right) =-e^{\sigma }J  \notag
\end{eqnarray}%
The Poincar\'{e}-Weyl Lie algebra $\mathfrak{pw}$ is a subalgebra of a
bigger one, the inhomogeneous general linear algebra $\mathfrak{igl}(n)$,
which was investigated in the context of non-symmetric Jordanian twist in 
\cite{BP1}.

We may associate to a given twist deformation a realization of noncommuting
coordinate functions in terms of differential operators, given by the
following relations:

\begin{itemize}
\item left-handed realization: 
\begin{equation}
\hat{x}_{L}^{\mu }=\left( \bar{\mathrm{f}}^{\alpha }\triangleright x^{\mu
}\right) \cdot \bar{\mathrm{f}}_{\alpha }
\end{equation}

\item right-handed realization: 
\begin{equation}
\hat{x}_{R}^{\mu }=\bar{\mathrm{f}}^{\alpha }\cdot \left( \bar{\mathrm{f}}%
_{\alpha }\triangleright x^{\mu }\right)
\end{equation}
\end{itemize}

In the case of the Jordanian twist (\ref{twist}) we obtain these
differential operators in the form: 
\begin{eqnarray}
\hat{x}_{L}^{\mu }&=&x^{\mu }+\theta \delta _{0}{}^{\mu }x^{\nu }\partial
_{\nu }  \notag \\
\hat{x}_{R}^{\mu }&=&x^{\mu }\left( 1+\theta \partial _{0}\right) .
\end{eqnarray}
Let us notice that such realizations are easily obtained from the explicit
expression of the star product, Eq. (\ref{sexp}), by star-multiplying a
generic function $f$ on the left (resp. on the right) with $x^\mu$.
Moreover, it is easy to check that using such realizations (left and right
respectively) the $\kappa $-Minkowski relations are satisifed: 
\begin{equation}
\left[ \hat{x}_{L,R}^{0},\hat{x}_{L,R}^{k}\right] =\pm \theta \hat{x}%
_{L,R}^{k}.
\end{equation}

\subsection{Integration measure}

One of the advantages of having a star product obtained from a twist is that
the differential calculus is completely determined by the twist operator 
\cite{Wess,Aschieri}, (also see \cite{ALV08} for generalization to field
theory). The guiding principle to obtain all the geometric structures
related to the noncommutative algebra of functions is the following. Every
time we have a bilinear map 
\begin{equation}
\mu :X\times Y\rightarrow Z
\end{equation}%
where $X,Y,Z$ are modules over the algebra $\mathcal{A}$, with an action of
the twist $F$ on $X$ and $Y$, we combine the map with the action of the
twist so to obtain a deformed map $\mu _{\star }$ 
\begin{equation}
\mu _{\star }=\mu \circ F^{-1}.  \notag
\end{equation}%
In the particular case of the wedge product of elementary one-forms deformed
via $F_{J}$ (\ref{twist}) we have 
\begin{eqnarray}
dx^{\mu }\wedge _{\star }dx^{\nu } &=&\wedge \;\circ \;F^{-1}(dx^{\mu
}\otimes dx^{\nu })={\bar{\mathrm{f}}}^{\alpha }(dx^{\mu })\wedge \bar{%
\mathrm{f}}_{\alpha }(dx^{\nu })  \notag \\
&=&dx^{\mu }\wedge dx^{\nu }+\theta \mathcal{L}_{\partial _{0}}(dx^{\mu
})\wedge \mathcal{L}_{x^{\sigma }\partial _{\sigma }}(dx^{\nu })+O(\theta
^{2})=dx^{\mu }\wedge dx^{\nu }  \label{wedge}
\end{eqnarray}%
with $\mathcal{L}_{X}$ the Lie derivative with respect to $X$, so that the
volume form on $\mathbb{R}^{d+1}$ is the undeformed one $\Omega
=dx^{0}\wedge dx^{1}\wedge ..\wedge dx^{d}$. The star product (\ref{sexp})
however, is not cyclic with respect to $|\Omega |$ 
\begin{equation}
\int {\mathrm{d}}^{d+1}x\;f\star g\neq \int {\mathrm{d}}^{d+1}x\;g\star f
\end{equation}%
the reason being that one of the generators yielding the twist operator is
the dilation, $J=-x^{\mu }\partial _{\mu }$, which does not preserve the
volume form $\Omega $. For (\ref{sexp}) to be cyclic, we could consider to
modify the integral measure by the introduction of a measure function $%
h\left( x\right) $ such that : 
\begin{equation}
\int h(x)d^{d+1}x\,f\star g=\int h(x)d^{d+1}x\,g\star f.
\end{equation}%
On expanding the star product in powers of $\theta $ by means of (\ref{sexp}%
) we obtain at first order in $\theta $ 
\begin{equation}
\left[ \int h(x)(f\star g-g\star f)\right] _{\theta }=\theta \int h(x){x^{k}}%
\left( {\partial _{0}f\cdot \partial _{k}g-\partial _{k}f\cdot \partial _{0}g%
}\right)  \label{firstord}
\end{equation}%
thus, upon integrating by parts, we obtain that the cyclicity condition is
satisfied at first order in $\theta $ if $h(x)$ satisfies the following
equations: 
\begin{equation}
\partial _{0}h\left( x\right) =0\quad ,\quad x^{k}\partial _{k}h\left(
x\right) =-d\,h\left( x\right)  \label{measure_eq}
\end{equation}%
Examples of solutions are \cite{Moller} in $d+1$-dimensions: 
\begin{eqnarray}
&&r^{-d}\;,\text{where}\;r=\sqrt{\sum_{i}x_{i}^{2}}  \notag  \label{condts}
\\
&&\left( \Pi _{i=1}^{d}x_{i}\right) ^{-1}  \notag \\
&&\left( \sum_{i=1}^{d}\left( x_{i}\right) ^{k}\right) ^{-\frac{d}{k}}
\end{eqnarray}

%They all satisfy (\ref{measure_eq). 
However, such choices break down already at second order. Indeed, after some
calculation we obtain 
\begin{eqnarray}
&&\left[ \int {\mathrm{d}}^{d+1}x\;h\left( x\right) f\star g-\int {\mathrm{d}%
}^{d+1}x\;h\left( x\right) g\star f\right] _{\theta ^{2}}  \notag \\
&=&\frac{\theta ^{2}}{2}\int {\mathrm{d}}^{d+1}xh\left( x\right) \{x^{\rho
}x^{\mu }\left( \partial _{0}^{2}f\cdot \partial _{\rho }\partial _{\mu
}g-\partial _{\rho }\partial _{\mu }f\cdot \partial _{0}^{2}g\right) \}
\label{2ndord}
\end{eqnarray}%
On integrating by parts and on using the first order conditions (\ref%
{measure_eq}), we get that the only solution is $h\left( x\right) =0$.

Another possibility would be to consider a natural modification of the star
product here introduced. Indeed, the star product in Proposition \ref{lemma1}
by reduction of the Wick-Voros product singles out just one of the possible
twists reproducing $\kappa $-Minkowski relations, i.e. the non-symmetric
Jordanian twist (\ref{twist}) \cite{Jord1,Jord2}. However there exists a
possible symmetrization of this twist, introduced in \cite%
{Tolstoy:arXiv:0712.3962} %, see (3.12) and (3.18) therein] 
%via the following Theorem  (2.3 in \cite{Tolstoy:arXiv:0712.3962}):
which we here briefly review.

For $F=\mathrm{f}^{\alpha }\otimes \mathrm{f}_{\alpha }$, a twisting
two-tensor of the Hopf algebra $\left( U{\mathfrak{g}},\Delta ,S,\epsilon
\right) $ and $\lambda =\sqrt{\mathrm{f}^{\alpha }S\left( \mathrm{f}_{\alpha
}\right) }$ there exists a related twisting two-tensor $F_{r}^{\left(
\lambda \right) }$: 
\begin{equation}
F_{r}^{\left( \lambda \right) }:=\lambda ^{-1}\otimes \lambda ^{-1}F\Delta
\left( \lambda \right)  \label{symm_tw}
\end{equation}%
which is locally $r$-symmetric, with $r$ the classical $r$ matrix. One says
that the twist is locally r-symmetric if it satisfies the cocycle (\ref%
{cocycle}) and normalization (\ref{normal}) conditions and if its expansion 
%of $F_{r}\left( \theta \right) $
in powers of the deformation parameter $\theta $ has the form: $F_{r}\left(
\theta \right) =1+c\theta r+O\left( \theta ^{2}\right) $ where $c\neq 0$ is
a numerical coefficient.

One can notice immediately that the Jordanian twist (\ref{twist}) 
%$F_{J}=1\otimes 1+\theta P_{0}\otimes J$,
%whereas the classical $r-$matrix is $r=P_{0}\wedge J$, therefore this twist
is not $r-$symmetric, since the classical $r-$matrix associated with it is $%
r=P_{0}\wedge J$.

The corresponding $r-$symmetric version of (\ref{twist}) is given by \cite%
{Tolstoy:arXiv:0712.3962} 
\begin{eqnarray}
F_{Jrs} &=&\exp (\frac{\theta }{2}(JP_{0}\otimes 1+1\otimes JP_{0})\,)\exp
(\ln \left( 1+\theta P_{0}\right) \otimes J)\times   \notag \\
&&\times \exp \left( -{\frac{\theta }{2}}\left( JP_{0}{\otimes 1+J\otimes
P_{0}+P_{0}\otimes J+1\otimes }JP_{0}\right) \right) 
\end{eqnarray}%
Its inverse %\begin{eqnarray}
%F_{Jrs}^{-1} &=&\exp \left( -{\frac{\theta }{2}}\left( JP_{0}{\otimes
%1+J\otimes P_{0}+P_{0}\otimes J+1\otimes }JP_{0}\right) \right) \,\exp (-\ln
%\left( 1+\theta P_{0}\right) \otimes J)\times  \notag \\
%&&\times \exp (\frac{\theta }{2}(JP_{0}\otimes 1+1\otimes JP_{0})\,)
%\end{eqnarray}
can be re-written using the differential representation of the generators $J$
and $P_{0}$: 
\begin{eqnarray}
F_{Jrs}^{-1} &=&\,\exp \left( -{\frac{\theta }{2}}\left( x^{\mu }\partial
_{\mu }\partial _{0}{\otimes 1+x^{\mu }\partial _{\mu }\otimes \partial
_{0}+\partial _{0}\otimes x^{\mu }\partial _{\mu }+1\otimes }x^{\mu
}\partial _{\mu }\partial _{0}\right) \right) \exp (\ln \left( 1+\theta
\partial _{0}\right) \otimes x^{\mu }\partial _{\mu })\times   \notag \\
&&\times \exp \left( \frac{\theta }{2}(x^{\mu }\partial _{\mu }\partial
_{0}\otimes 1+1\otimes x^{\mu }\partial _{\mu }\partial _{0})\right) 
\end{eqnarray}%
For the $\star $-product we can expand all three exponents in this symmetric
twist to be able to check the cyclic property order by order : 
\begin{eqnarray}
f\star _{J_{rs}}g &=&f\cdot g+\frac{\theta }{2}{x^{\mu }}\left( {\partial
_{0}f\cdot \partial _{\mu }g-\partial _{\mu }f\cdot \partial _{0}g}\right) +
\\
&&+\left( \frac{\theta }{2}\right) ^{2}\frac{1}{2}\{{x^{\mu }x^{\rho }}(%
\partial _{\rho }\partial _{\mu }\partial _{0}^{2}f\cdot g+f\cdot \partial
_{\rho }\partial _{\mu }\partial _{0}^{2}g{{+2{{{\partial _{\rho }}}\partial
_{\mu }}f\cdot \partial _{0}^{2}g+{2\partial _{0}^{2}f\cdot {{\partial
_{\rho }}}\partial _{\mu }g+}}}  \notag \\
&&{{{+2\partial _{0}f\cdot {{\partial _{\rho }}}\partial _{\mu }{{\partial
_{0}g}}+2{{\partial _{\rho }}}\partial _{\mu }{{\partial _{0}}}}f\cdot
\partial _{0}g{+{2\partial _{\rho }\partial _{0}f\cdot {{\partial _{0}}}%
\partial _{\mu }g}}+2\partial _{\rho }\partial _{0}^{2}f\cdot \partial _{\mu
}g}+2\partial _{\rho }f\cdot \partial _{\mu }\partial _{0}^{2}g{)}}+ \notag\\
&&+2{x^{\mu }[{\partial _{\rho }\partial _{0}f\cdot \partial _{0}g+{{%
\partial _{0}f\cdot {{\partial _{\rho }}\partial _{0}g}}}}}+3\partial
_{0}^{2}f{\cdot }\partial _{\rho }g{{+\partial _{\rho }f{\cdot \partial
_{0}^{2}g}}}+\partial _{\rho }\partial _{0}^{2}f{\cdot }g+f{\cdot }\partial
_{\rho }\partial _{0}^{2}g]\}+O\left( \theta ^{3}\right)   \notag
\end{eqnarray}

One can notice that the first order is the usual $\kappa $-Minkowski $\star -
$product (see e.g. \cite{Moller}, \cite{AgoGAC}, \cite{Dimitrijevic:2003pn}%
), which coincides for all the symmetric twists (Abelian and Jordanian)
related to the $\kappa $-Minkowski algebra. Therefore we know that one has
to modify the integral introducing the additional measure function $h\left(
x\right) $ such that : 
\begin{equation}
\int h(x)d^{d+1}x\,f\star _{Jrs}g=\int h(x)d^{d+1}x\,g\star _{Jrs}f.
\end{equation}%
At the first order we get:%
\begin{equation}
\left[ \int h(x)(f\star _{J_{rs}}g-g\star _{J_{rs}}f)\right] _{\theta
}=\theta \int h(x){x^{k}}\left( {\partial _{0}f\cdot \partial _{k}g-\partial
_{k}f\cdot \partial _{0}g}\right) 
\end{equation}%
that is, the same expression as for the non-symmetric twist, Eq. (\ref%
{firstord}). Therefore, after integrating by parts we obtain the same
equations for the measure function $h$, (\ref{measure_eq}), as in the
previous case, which again can be shown to be incompatible with the second
order conditions except for the trivial solution $h=0$. 
%They all satisfy (\ref{measure_eq). 
%But unfortunately such measure does not work at the second order as in the case of the non-symmetric Jordanian twist (\ref{twist}).
%, what can be easily checked by taking the following expression:
%$\int {\mathrm{d}}^{d+1}x\;\left( f\star g-g\star f\right) =\int {\mathrm{d}}%
%^{d+1}x\{\theta {x^{\mu }(\partial _{0}f\cdot \partial _{\mu }g-\partial
%_{\mu }f\cdot \partial _{0}g)}-\frac{\theta ^{2}}{2}x^{\mu }x^{\nu
%}(\partial _{0}^{2}f{\cdot \partial _{\mu }\partial _{\rho }}g-\partial
%_{0}^{2}g{\cdot \partial _{\mu }\partial _{\rho }}f)\}+O\left( \theta
%^{3}\right) $.
Therefore we conclude that we do not find any measure of the form $d\mu
=h(x)|\Omega |$ with respect to which $\kappa $-Minkowski star products
deriving from Jordanian twists can be made cyclic.

\section{Conlusions and outlook}

In this paper we have derived a new $\star$-product realizing the $\kappa$%
-Minkowski Lie algebra at the level of space-time coordinate functions,
which is obtained as a symplectic realization in terms of quadratic linear
functions on $\mathbb{C}^d$ with Wick-Voros noncommutativity. We have found
that it corresponds to the $\star$-product provided by a non-symmetric
Jordanian twist constructed in terms of the generators of the two
dimensional Borel algebra $\mathfrak{b}=\mathrm{span}\{P_{0},J;\left[
J\,,P_{0}\right] =P_{0}\}$. The new star product is not cyclic with respect
to the natural integration volume on $\mathbb{R}^d$, $|\Omega|= dx^0
dx^1....dx^d$ nor with respect to $h(x)|\Omega|$, with $h(x)$ any measure
function. Moreover, the lack of cyclicity does not depend on the details of
the Jordanian twist, because we have shown that the same negative result
holds for symmetrized Jordanian twist.

The existence of a star product which is not only cyclic, but also closed
with respect to the corresponding trace functional, which is known as the
generalized Connes-Flato-Sternheimer conjecture, is proven in \cite{FelderSh}%
.\footnote{%
For convenience we report the result
\par
\noindent\textbf{Theorem \cite{FelderSh}} Let $\mathcal{M}$ be a Poisson
manifold with the bivector field $\omega\left( x\right)$, and let $\mathbf{%
\Omega}$ be any volume form on $\mathcal{M}$ such that $\mathrm{div}_\Omega
\omega=0$. Then there exists a star product on $C^\infty(\mathcal{M})$ such
that for any two functions $f$ and $g$ with compact support one has: 
\begin{equation}  \label{FS}
\int (f\star g)\cdot \mathbf{\Omega}= \int f\cdot g\cdot\mathbf{\Omega}.
\end{equation}%
} The problem is how to find it. A possible strategy, see e.g. \cite{Kup15,
KV15}, which we plan to investigate elsewhere, is to start with some
appropriate star product, for example the one we have derived in the present
paper, and then use the gauge freedom \cite{Konts} in the definition of the
star product to obtain the desirable one. Indeed if $\star$ and $%
\star^{\prime }$ are two different star products corresponding to the same
Poisson bi-vector $\omega^{\mu\nu} (x) \partial_\mu\wedge\partial_\nu$ (in
the present case we have $\omega=x^{k}\partial _{k}\wedge \partial _{0}$)
they are related by a local transformation 
\begin{equation}
T\left( f\star g\right)=\left( Tf\star^{\prime }Tg\right) ,  \label{gauge}
\end{equation}
where $T=1+O(\theta)$ is what we shall call the gauge operator. An instance
of such a procedure can be found in \cite{Dito}, where a gauge operator $T$
was constructed, realizing the equivalence between the Gutt star product 
\cite{Gutt} and the Kontsevich one on the dual of Lie algebras. In \cite%
{KV15} this approach is used to determine the gauge operator connecting the
Weyl ordered star-product of $\mathfrak{su}(2)$ type on $\mathbb{R}^3$ to
the closed one.

\section*{Acknowledgments}

The authors would like to thank S. Meljanac for the helpful comments and for
careful reading the paper.\newline
A.P. acknowledges the funding from the European Union's Horizon 2020
research and innovation programme under the Marie Sk{\l }odowska-Curie grant
agreement No 609402 - 2020 researchers: Train to Move (T2M). Part of this
work was supported by National Science Center project 2014/13/B/ST2/04043.
P.V. acknowledges support by COST (European Cooperation in Science and
Technology) in the framework of COST Action MP1405 QSPACE and support by
Compagnia di San Paolo in the framework of the program STAR 2013.

\end{document}